\title{A Nonparametric Statistical Approach to Content Analysis of Items} 
\newtheorem{proposition}{Proposition}
\newenvironment{proof}{\paragraph{Proof:}}{\hfill$\square$}
\author{%
	Diego Marcondes\footnote{Universidade de S\~ao Paulo, Brazil (dmarcondes@ime.usp.br)}
	\and 
	Nilton Rogerio Marcondes\footnote{Universidade de Coimbra, Portugal}}
\providecommand{\keywords}[1]{\textbf{\textit{Keywords---}} #1}
\date{}
\begin{document}
	
	\maketitle

\begin{abstract}
In order to use psychometric instruments to assess a multidimensional construct, we may decompose it in dimensions and, in order to assess each dimension, develop a set of items, so one may assess the construct as a whole, by assessing its dimensions. In this scenario, the content analysis of items aims to verify if the developed items are assessing the dimension they are supposed to. In the content analysis process, it is customary to request the judgement of specialists in the studied construct about the dimension that the developed items assess, what makes it a subjective process as it relies upon the personal opinion of the specialists. This paper aims to develop a nonparametric statistical approach to the content analysis of items in order to present a practical method to assess the consistency of the content analysis process, by the development of a statistical test that seeks to determine if all the specialists have the same capability to judge the items. A simulation study is conducted to assess the consistency of the test and it is applied to a real validation process. \\
\end{abstract}

\keywords{Nonparametric statistics; Applied statistics; Content validity; Psychometric instruments; Psychometrics.}

\section{Introduction}

Psychometric instruments play an important role in researches in the areas of psychology and education, thus it is necessary that they are thoroughly developed and validated, so that no erroneous results are obtained by their application. The psychometric instruments are developed in order to assess psychological constructs that cannot be operationally defined and, consequently, cannot be objectively assessed. According to \cite{law1998}, a construct is said to be multidimensional when it consists of a number of interrelated attributes or dimensions and exists in multidimensional domains. In order to develop a psychometric instrument to assess a multidimensional construct, a set of items, that assess a dimension, is developed for each one of its dimensions in furtherance of assessing the construct as a whole. The validation process of an instrument must guarantee that each item assesses its dimension correctly according to the desirable characteristics of a psychometric instrument, e.g., reliability and trustworthiness \cite{haynes1995}.

The validity of an instrument is divided in four categories: predictive validity, concurrent validity, content validity and construct validity. The first two of these may be together considered as criterion-oriented validation processes \cite{cronbach1955}. The predictive validity is studied when the instrument aims to predict a criterion. The instrument is applied and a correlated construct to the criterion is assessed, providing a prediction for the criterion of interest. The concurrent validity is studied when the instrument is proposed as a substitute for another \cite{cronbach1955}. The study of the construct validity of a psychometric instrument is necessary when the result of the instrument is the measure of an attribute or a characteristic that is not operationally defined. According to the construct validity, an instrument is valid when it is possible to determine which construct accounts for the variance of the instrument performance. 

Content validity is established by showing that the instrument items are a sample of a universe in which the investigator is interested. Content validity is ordinarily to be established deductively, by defining a universe of items and sampling systematically within this universe to establish the instrument \cite{cronbach1955}. Another definition for content validity is that it is the degree to which elements of an assessment instrument are relevant to and representative of the targeted construct for a particular assessment purpose \cite{haynes1995}. 

A list consisting of thirty-five procedures for the content validation was proposed by \cite{haynes1995}. Amidst these procedures are to match each item to the dimension of the construct that it assesses and request the judgement of specialists in the construct, also called judges, about the developed items. The accomplishment of these procedures is imperative to verify if the developed items are a sample of the universe that the instrument aims to assess. These procedures, components of the theoretical analysis of items, are subjective for they rely upon the personal opinions of specialists and researchers. Indeed, the theoretical analysis of items is done by judges and aims to establish the comprehension of the items (semantic analysis) and their pertinence to the attribute that they propose to assess.

This paper aims to propose a nonparametric statistical approach to the content analysis of items in furtherance of assessing its consistency and reliability. Therefore, our approach does not seek to establish the validity of the instrument, but rather assess the consistency of the content analysis process, so that its rule about the instrument may be trusted. Thus, this approach must be applied among other instrument validation methods, quantitative and qualitative, e.g., semantic analysis, pretrial and factorial analysis, in order to ensure the reliability, consistency, validity and trustworthiness of the psychometric instrument.

\section{Method}

The researcher, supported by the theory of the construct that the instrument aims to assess, develops \textit{m} items and for each item assigns a theoretical dimension according to the theory and/or his opinion about which dimension the item assesses. Although the items and their dimensions have theoretical foundations, it is necessary to test them in order to determine if every item is indeed assessing the dimension it is supposed to.

In order to fulfil such test, the items are sent to \textit{s} specialists in the construct, so that they may judge the items according to the dimension they assess. The items may be sent to at least six specialists and should be presented to them in a random order and without their theoretical dimensions, so that their judgement is not biased.

A condition for an item to be excluded from the instrument is determined based on the judgement of the specialists. This condition must exclude the items that do not belong to the universe that the instrument aims to assess, so that the not excluded items are a sample of such universe. A possible way to proceed is to determine a \textit{Concordance Index} (\textit{CI}) that states that all items in which less than $c$\% of the specialists agree on the dimension that they assess must be excluded. One may also take the \textit{Content Validity Ratio} (\textit{CRV}), as proposed by \cite{lawshe1975}, as a condition to exclude items that do not belong to the universe that the instrument aims to assess. 

The method to be developed in this paper aims to determine if all specialists have the same capability to judge the items according to their dimensions, through the analysis of the judgement of the specialists about the items that were not excluded by the established condition. However, the method does not rank the specialists according to their capabilities, but only determine if all specialists have the same capability. Therefore, it is not possible to determine the specialists with low capability. 

If there are no evidences that the capabilities of the specialists are different, their judgement is accepted and the items not excluded by the established condition are used in the next steps of the instrument validation process. Indeed, if all the specialists have the same capability, it may happen that they are all highly capable or little capable  of judging the items, though the proposed method will not be able to differentiate between the two cases. Nevertheless, the two scenarios may be differentiated by a qualitative analysis of the specialists judgements, by observing if they agree with the theoretical dimension of the items and, when they do not agree, if there is some theory that supports their choice. Therefore, if their judgements are consistent with some theory, then the specialists may be regarded as being all highly capable of judging the items, given that they have all the same capability to judge them.

On the other hand, if it is determined that the specialists do not have all the same capability to judge the items, then at least one specialist is less capable to judge them than the others, what may bias the validation of the instrument. Therefore, in such scenario, we propose two approaches in order to avoid a biased validation process. First, we propose that the specialists judgement be disregarded and a new group of specialists be requested to judge the items. However, this approach may be impractical in some cases, as time and resources may be too limited to repeat the cycle of specialists judgements more than once. Nonetheless, we propose a much more practical approach that consists in applying the proposed method to all subgroups of specialists of size $s^{*}$, $6 \leq s^{*} < s$, of the original group of specialists, and then choose the judgement of the subgroup whose specialists have all the same capability to judge the items. This approach will be presented in more details in the application section.

\section{Notation and Definitions}

Let $C = \{C_{1},\dots, C_{n}\}$ be a construct divided in \textit{n} dimensions and \textit{U} be the universe of all the items that assess the dimensions of \textit{C}. A set $I = \{i_{1}, \dots, i_{m}\}$ of items is developed based on the theory about \textit{C} and then a subset $I^{*} \subset I$ of items, that we believe to be a subset of \textit{U}, is determined, by the following process.

Denote $E = \{e_{1},\dots,e_{s}\}$ a set of $s$ specialists and let $C_{c(i_{l})} \in C$ be the dimension that the item $i_{l} \in I^{*}$ assesses. Let the random variables $\{X_{i_{l}}(e_{j}):i_{l} \in I,e_{j} \in E\}$, defined on $(\Omega, \mathbb{F},\mathbb{P})$, be so that $X_{i_{l}}(e_{j}) = k$ if the specialist $e_{j}$ judged the item $i_{l}$ at the \textit{kth} dimension of \textit{C}. Note that if $i_{l} \in I^{*}$ and $X_{i_{l}}(e_{j}) = c(i_{l})$, then the specialist $e_{j}$ judged the item $i_{l}$ correctly.

The capability of the specialist $e_{j}$ to judge the items  is defined as
\begin{equation*}
\boldsymbol{P}(e_{j}) = (P_{i_{l}}(e_{j}): i_{l} \in I^{*})
\end{equation*} 
in which $P_{i_{l}}(e_{j}) = \mathbb{P}\{X_{i_{l}}(e_{j}) = c(i_{l})\}, \forall i_{l} \in I^{*}$ and $\forall e_{j} \in E$. In the proposed approach, we are interested in developing a hypothesis test to determine if $\boldsymbol{P}(e_{j}) = \boldsymbol{p} \in [0,1]^{|I^{*}|}, \forall e_{j} \in E$, i.e., if all specialists have the same capability to judge the items.

For this purpose, let a random sample of the judgement of the specialist $e_{j}$ about the items of $I$ be given by $\boldsymbol{x}_{e_{j}} = \{x_{i_{1}}(e_{j}), \dots, x_{i_{m}}(e_{j})\}$ and let $\boldsymbol{X}$ be the space of all possible random samples $\{\boldsymbol{x}_{e_j}: e_{j} \in E\}$. Define the random sets $\{M_{i_{l}}: i_{l} \in I\}$ as
\begin{equation*}
M_{i_{l}} = \arg\max\limits_{k \in \{1, \dots, n\}} \Bigg\{\sum_{e_{j} \in E} \mathds{1}_{\{k\}} \Big(X_{i_{l}}(e_{j})\Big) \Bigg\}
\end{equation*}
in which $\mathds{1}_{\{A\}}(\cdot)$ is the indicator function of the set $A$. Note that $M_{i_{l}}$ is the set containing the number of the dimensions in which the majority of the specialists judged the item $i_{l} \in I$. Given a random sample $\{\boldsymbol{x}_{e_j}: e_{j} \in E\} \in \boldsymbol{X}$ and a subset $I^{*} \subset I$ of items, the set $\{m_{i_{l}}: i_{l} \in I^{*}\}$, determined from the sample values $\{\boldsymbol{x}_{e_j}: e_{j} \in E\}$, is a random sample of $\{M_{i_{l}}: i_{l} \in I^{*}\}$.

The subset $I^{*}$ may be defined by a condition function, a function of the sample $\{\boldsymbol{x}_{e_j}: e_{j} \in E\}$, given by $f: \boldsymbol{X} \mapsto \mathcal{P}(I)$, in which $\mathcal{P}(\cdot)$ is the power set operator. The condition function must be so that if $\{m_{i_{l}}: i_{l} \in I^{*}\}$ is determined from $\{\boldsymbol{x}_{e_j}: e_{j} \in E\} \in \boldsymbol{X}$ and $I^{*} = f(\boldsymbol{x}_{e_j}: e_{j} \in E)$, then $|m_{i_{l}}| = 1, \forall i_{l} \in I^{*}$. The \textit{CI} for $c > 50$ and the $CRV$ are condition functions. From now on, it will be supposed that the condition function may be expressed as a \textit{CI}.

The condition function is based on the assumption that an item is in the universe of items that assess the construct of interest if the majority of specialists agree on the dimension it assesses. Of course, one may take a different criterion to exclude the items that do not assess the construct of interest, although our method may be applied only if the criterion can be expressed as a condition function, for it is based on the fact that $M_{i_{l}}$ is a univariate random variable.

Finally, define
\begin{equation*}
W_{i_{l}}(e_{j}) = \mathds{1}_{\{M_{i_{l}}\}} \Big(X_{i_{l}}(e_{j})\Big),
\end{equation*}
as the random variable that indicates if the specialist $e_{j}$ judged the item $i_{l}$ at the same dimension as the majority of the specialists. Given a random sample $\{\boldsymbol{x}_{e_j}: e_{j} \in E\} \in \boldsymbol{X}$ and a subset $f(\boldsymbol{x}_{e_j}: e_{j} \in E) = I^{*} \subset I$ of items, the set $\{w_{i_{l}}(e_{j}): i_{l} \in I^{*}, e_{j} \in E\}$, determined from the sample values $\{\boldsymbol{x}_{e_j}: e_{j} \in E\}$, is a random sample of $\{W_{i_{l}}(e_{j}): i_{l} \in I^{*}, e_{j} \in E\}$.

On the one hand, whilst we observe the values of the random variables $\{X_{i_{l}}(e_{j}):i_{l} \in I,e_{j} \in E\}$, we do not know if the specialists judged the items correctly or not, for the dimension that an item really assesses (if any) is unknown. Therefore, it is not possible to differentiate the specialists by the number of items they judged correctly, for example.

On the other hand, from the random variables $\{W_{i_{l}}(e_{j}): i_{l} \in I, e_{j} \in E\}$, we know the concordance of the specialists on the judgement of the items, what gives us a relative measure of the capability of the specialists to judge the items. Therefore, we are able to test if all the specialists have the same capability to judge the items, although we cannot determine the capability of each one.

\section{Assumptions} 

The development of the items and the judgement of the specialists must satisfy two assumptions so that the method to be presented below may be applied:
\begin{enumerate}
	\item[\textbf{1.}] Each item $i_{l} \in I^{*}$ assesses one, and only one, dimension $C_{c(i_{l})} \in C$.
	\item[\textbf{2.}] The random variables $\{X_{i_{l}}(e_{j}): i_{l} \in I^{*}, e_{j} \in E\}$ are independent.
\end{enumerate}

Assumption \textbf{1} establishes that the items that were not excluded by the condition function, i.e., the items in $I^{*}$, are well constructed and assess only one dimension of \textit{C}, while assumption \textbf{2} imposes that the specialists judge the items independently of each other and that the judgement of a specialist about one item does not depend on his judgement about any other item. Those assumptions are not strong, for it is expected that they will be satisfied if the items were well constructed. Indeed, as better the condition function is in determining what items are not in \textit{U}, the better will be  the quality of the items in $I^{*}$. Therefore the assumptions above are closely related to the condition function. If, in fact, $I^{*} \subset U$, then the first assumption is immediately satisfied, for there is no intersection between two dimensions of a construct, and the second assumption may also hold, for the items are well defined.

\section{Mathematical Deduction}

Given a random sample $\{\boldsymbol{x}_{e_j}: e_{j} \in E\} \in \boldsymbol{X}$, it is not trivial to estimate the capabilities $\{\boldsymbol{P}(e_{j}): e_{j} \in E \}$, for the dimension that each item assesses is unknown. Examining such random sample, it is known that the specialist $e_{j}$ judged the item $i_{l}$ at the dimension $C_{k}$, but it is not possible to determine, with probability 1, if he judged such item correctly. Therefore, the problem is, given a random sample $\{\boldsymbol{x}_{e_j}: e_{j} \in E\} \in \boldsymbol{X}$, to determine random variables that allow us to test if the capability of all the specialists is the same. It will be shown that if the random variables $\{W_{i_{l}}(e_{j}): e_{j} \in E \}$ are not identically distributed $\forall i_{l} \in I^{*}$, then the specialists do not have all the same capability to judge the items. Indeed, in order to test if the capability of all specialists is the same, we will consider the following null hypotheses:
\begin{equation*}
H_{0}: \begin{cases}
\text{\textbf{1.} } & \boldsymbol{P}(e_{j}) = (p^{(i_{1})}, \dots, p^{(i_{|I^{*}|})}) = \boldsymbol{p} \in [0,1]^{I^{*}}, \forall e_{j} \in E\\
\text{\textbf{2.} } & \Big(\mathbb{P}\{X_{i_{l}}(e_{j}) = 1\}, \dots, \mathbb{P}\{X_{i_{l}}(e_{j}) = n\}\Big) \text{ is a permutation of } \\ & \Big(p^{(i_{l})},p_{1}^{(i_{l})}, \dots, p_{n-1}^{(i_{l})}\Big) \in [0,1]^{n}, p^{(i_{l})} + \sum_{k=1}^{n-1} p_{k}^{(i_{l})} = 1, \forall i_{l} \in I^{*}, \forall e_{j} \in E \\
\end{cases}
\end{equation*}
Of course, we are only interested in testing the first part of $H_{0}$, that refers to the capability of the specialists, i.e., that all specialists have the same capability to judge the items. However, the second part is needed to develop a test statistic for $H_{0}$. It will be argued that for great values of $p^{(i_{l})}$ the hypothesis that is actually being tested is the first one.

The propositions below set the scenario for the nonparametric test that will be used to test $H_{0}$.

\begin{proposition}
	\label{P1}
	The random variables $\{W_{i_{l}}(e_{j}): i_{l} \in I^{*}\}$ are independent $\forall e_{j} \in E$, but the random variables $\{W_{i_{l}}(e_{j}): e_{j} \in E\}$ are dependent $\forall i_{l} \in I^{*}$.
\end{proposition}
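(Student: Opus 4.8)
The plan is to treat the two assertions separately, with both resting on one structural observation: each $W_{i_{l}}(e_{j}) = \mathds{1}_{\{M_{i_{l}}\}}\Big(X_{i_{l}}(e_{j})\Big)$ is a measurable function of the entire block of judgements about the single item $i_{l}$, namely $\{X_{i_{l}}(e_{j'}): e_{j'} \in E\}$. This is because $W_{i_{l}}(e_{j})$ depends on $X_{i_{l}}(e_{j})$ directly and on $M_{i_{l}}$, and $M_{i_{l}}$ is itself the $\arg\max$ computed from all the votes $\{X_{i_{l}}(e_{j'}): e_{j'} \in E\}$ on that item. Fixing this observation first makes both parts short.

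For the first claim (independence across items for a fixed specialist), I would fix $e_{j}$ and note that for distinct items $i_{l} \neq i_{l'}$ the blocks $\{X_{i_{l}}(e_{j'}): e_{j'} \in E\}$ and $\{X_{i_{l'}}(e_{j'}): e_{j'} \in E\}$ are \emph{disjoint} subfamilies of the collection $\{X_{i_{l}}(e_{j}): i_{l} \in I^{*}, e_{j} \in E\}$, which is independent by Assumption \textbf{2}. Since disjoint subfamilies of an independent family are independent, and a measurable function of one block is independent of a measurable function of any disjoint block, the variables $\{W_{i_{l}}(e_{j}): i_{l} \in I^{*}\}$ are mutually independent. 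This part is routine once the block-measurability is recorded.

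For the second claim (dependence across specialists for a fixed item), I would argue by contradiction from a deterministic constraint on the sum. Fixing $i_{l}$, observe that $\sum_{e_{j} \in E} W_{i_{l}}(e_{j})$ counts exactly the specialists whose vote falls in the majority set $M_{i_{l}}$; since $M_{i_{l}}$ is an $\arg\max$ over the $n$ dimensions, a pigeonhole argument forces this count to be at least $\lceil s/n \rceil \geq 1$, whether or not $M_{i_{l}}$ is a singleton. Hence $\mathbb{P}\Big(\bigcap_{e_{j} \in E}\{W_{i_{l}}(e_{j}) = 0\}\Big) = 0$. Were the family $\{W_{i_{l}}(e_{j}): e_{j} \in E\}$ independent, this probability would factor as $\prod_{e_{j} \in E} \mathbb{P}\{W_{i_{l}}(e_{j}) = 0\}$, which is strictly positive as soon as every specialist has a positive chance of dissenting from the eventual majority; the two conclusions are incompatible, so the variables must be dependent.

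The main obstacle is in this second claim, and it is a matter of ruling out degeneracies rather than a deep difficulty. I must exclude the case in which some factor $\mathbb{P}\{W_{i_{l}}(e_{j}) = 0\}$ vanishes, i.e. a specialist who agrees with the majority almost surely, and I must check that the argument survives when $M_{i_{l}}$ is not a singleton. I would handle the first point by imposing a mild non-degeneracy assumption (each $X_{i_{l}}(e_{j})$ places positive mass on at least two dimensions, so that every specialist can land in the minority with positive probability), and the second by noting that the pigeonhole lower bound on the majority count holds verbatim under ties, so the sum constraint, and therefore the contradiction, is unaffected.
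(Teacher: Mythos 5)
Your treatment of the first claim matches the paper's: for distinct items the $W_{i_{l}}(e_{j})$ are functions of disjoint blocks of the independent family $\{X_{i_{l}}(e_{j})\}$ (Assumption \textbf{2}), hence independent; you make explicit the block-disjointness that the paper leaves implicit, which is fine.

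The gap is in the second claim. The paper does not use your pigeonhole bound; it uses the fact that $i_{l}$ lies in $I^{*}$, which by the definition of the condition function (a \textit{CI} with $c > 50$) forces $\sum_{e_{j} \in E} W_{i_{l}}(e_{j}) \geq \lceil cs/100 \rceil$ almost surely, i.e.\ a strict majority of specialists agree. Your bound $\sum_{e_{j} \in E} W_{i_{l}}(e_{j}) \geq \lceil s/n \rceil \geq 1$ is far weaker, and the contradiction you draw from it requires \emph{every} factor $\mathbb{P}\{W_{i_{l}}(e_{j}) = 0\}$ to be strictly positive. Your proposed patch --- each $X_{i_{l}}(e_{j})$ puts positive mass on at least two dimensions --- does not deliver this. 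Counterexample: let the specialists' votes have pairwise disjoint two-point supports (possible once $n \geq 2s$). Then every realization is an $s$-way tie, $M_{i_{l}}$ contains every specialist's vote, so $W_{i_{l}}(e_{j}) = 1$ almost surely for all $j$; the family is then (degenerately) independent, your product of zeros equals the true probability zero, and no contradiction arises. So under your stated hypotheses the dependence conclusion is actually false, and no repair can avoid importing further structure. The structure to import is exactly what you dropped: membership in $I^{*}$. Note that in the counterexample no majority ever forms, so the item is almost surely excluded from $I^{*}$ --- the conditioning on $i_{l} \in I^{*}$ is not a removable convenience but the very mechanism creating the dependence, and it is what the paper's one-line proof invokes. (A fully rigorous version of the paper's argument also needs some non-degeneracy, since independent constants satisfy any constraint; but any such fix must be built on the $\lceil cs/100 \rceil$ constraint coming from the \textit{CI}, not on the trivial bound of $1$.)
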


\begin{proof}
	On the one hand, the random variables $\{W_{i_{l}}(e_{j}): i_{l} \in I^{*}\}$ are each, by assumption \textbf{2}, function of independent random variables, therefore they are independent. On the other hand, note that $\sum\limits_{e_{j} \in E} W_{i_{l}}(e_{j}) \geq \lceil \frac{cs}{100} \rceil$, for at least $c\%$ of the specialists must agree on the dimension an item in $I^{*}$ assesses, what establishes a dependence.
\end{proof}

\begin{proposition}
	\label{P2}
	Under $H_{0}$, the random variables $\{W_{i_{l}}(e_{j}): e_{j} \in E\}$ are identically distributed for all $i_{l} \in I^{*}$.
\end{proposition}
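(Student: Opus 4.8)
The plan is to reduce the statement to the exchangeability of the specialists' judgements about a single item, and then to exploit that $M_{i_l}$ is a permutation-invariant functional of those judgements. First I would fix an arbitrary item $i_l \in I^{*}$ and isolate the vector $\big(X_{i_l}(e_1), \dots, X_{i_l}(e_s)\big)$. By Assumption \textbf{2} its coordinates are independent, and the content I would extract from $H_0$ is that, for this fixed item, the law of $X_{i_l}(e_j)$ on $\{1, \dots, n\}$ is one and the same permutation of $\big(p^{(i_l)}, p_1^{(i_l)}, \dots, p_{n-1}^{(i_l)}\big)$ for every $e_j \in E$; note that by the definition of capability and by part \textbf{1} of $H_0$ the mass $p^{(i_l)}$ necessarily sits on the correct dimension $c(i_l)$, and part \textbf{2} fixes how the remaining mass is spread. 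Independence together with this common marginal makes the vector i.i.d., hence exchangeable: for any permutation $\tau$ of $E$ one has $\big(X_{i_l}(e_{\tau(1)}), \dots, X_{i_l}(e_{\tau(s)})\big) \stackrel{d}{=} \big(X_{i_l}(e_1), \dots, X_{i_l}(e_s)\big)$.

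Second, I would record that $M_{i_l}$ is invariant under relabelling the specialists. Indeed $M_{i_l} = \arg\max_{k} \sum_{e_j \in E} \mathds{1}_{\{k\}}(X_{i_l}(e_j))$ depends on the sample only through the counts $\sum_{e_j} \mathds{1}_{\{k\}}(X_{i_l}(e_j))$, which are unchanged by any permutation $\tau$ of $E$. Consequently the agreement indicator transforms predictably: evaluating $W_{i_l}(e_j) = \mathds{1}_{M_{i_l}}(X_{i_l}(e_j))$ on the $\tau$-permuted sample yields $\mathds{1}_{M_{i_l}}\big(X_{i_l}(e_{\tau(j)})\big) = W_{i_l}(e_{\tau(j)})$ on the original sample, because the mode recomputed from the permuted counts is still $M_{i_l}$.

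Third, to compare two specialists $e_a$ and $e_b$ I would take $\tau$ to be the transposition exchanging $a$ and $b$. The previous step shows that $W_{i_l}(e_a)$, read off the $\tau$-permuted sample, coincides with $W_{i_l}(e_b)$ read off the original; since the permuted sample has the same law as the original by exchangeability, and $W_{i_l}(e_a)$ is a fixed measurable function of the sample, we obtain $W_{i_l}(e_a) \stackrel{d}{=} W_{i_l}(e_b)$. As $a$ and $b$ were arbitrary, the family $\{W_{i_l}(e_j): e_j \in E\}$ is identically distributed; and since each $W_{i_l}(e_j)$ is $\{0,1\}$-valued, this is equivalent to (and it would suffice to check) that the single number $\mathbb{P}\{W_{i_l}(e_j)=1\}$ does not depend on $j$.

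The main obstacle I anticipate is precisely that $M_{i_l}$ is not a fixed target but a random dimension built from all the specialists, including the one whose agreement is being tested, so one cannot simply condition on $M_{i_l}$ and read off the common marginal of $X_{i_l}(e_j)$. The remedy above is to treat $M_{i_l}$ as a symmetric functional of the data, so that permuting specialists commutes with forming the mode. Two secondary points I would settle for rigour: ties in the $\arg\max$ cause no trouble, since $W_{i_l}(e_j)$ is membership of $X_{i_l}(e_j)$ in the (possibly set-valued) $M_{i_l}$, and a set-valued mode is still permutation-invariant; and restricting to $i_l \in I^{*}$ amounts to conditioning on $\{i_l \in f(\cdot)\}$, an event which, because the condition function is a \textit{CI}, depends only on the counts and is therefore itself invariant under permutations of $E$, so the conditioning preserves exchangeability. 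I would also make explicit that the argument consumes the common-law reading of $H_0$, that is, that all specialists share the \emph{same} per-item distribution, not merely the same correct-dimension probability.
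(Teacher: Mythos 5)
Your proof is correct, but it takes a genuinely different route from the paper's. The paper proceeds by direct computation: it splits $\mathbb{P}\{W_{i_{l}}(e_{j}) = 1\}$ into the case where the mode equals the correct dimension $c(i_{l})$ and the case where it does not, conditions on $e_{j}$'s own judgement, and expresses each conditional probability through auxiliary binomials counting how many of the \emph{other} $s-1$ specialists chose the given dimension, arriving at $\mathbb{P}\{W_{i_{l}}(e_{j})=1\} = \mathbb{P}\{X^{(i_{l})} \geq f^{*}\}\,p^{(i_{l})} + \sum_{k=1}^{n-1}\mathbb{P}\{X_{k}^{(i_{l})} \geq f^{*}\}\,p_{k}^{(i_{l})}$ with $f^{*} = \lfloor cs/100 \rfloor$; since this closed form contains no $j$, the variables are identically distributed. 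Your argument --- i.i.d.\ hence exchangeable judgements, the mode as a count-based, permutation-invariant functional, and a transposition carrying $W_{i_{l}}(e_{a})$ into $W_{i_{l}}(e_{b})$ --- reaches the same conclusion with no computation at all. Each approach buys something: the paper's explicit formula is reused immediately after the proposition (to argue that when all $p^{(i_{l})} \approx 1$ one has $\mathbb{P}\{W_{i_{l}}(e_{j})=1\} \approx p^{(i_{l})}\mathbb{P}\{X \geq f^{*}\}$, so the test is essentially testing part \textbf{1} of $H_{0}$), and your proof does not produce it. Conversely, your proof is rigorous exactly where the paper is loose: it handles ties in the $\arg\max$ (the paper writes $\{X_{i_{l}}(e_{j}) = M_{i_{l}}\}$ as though the mode were always single-valued), and it deals honestly with the randomness of $I^{*}$ --- the paper silently replaces the mode event $\{M_{i_{l}}=k\}$ by the CI-threshold event that at least $f^{*}$ of the others chose $k$, an implicit conditioning on $i_{l} \in I^{*}$ that is never formalized, whereas your observation that this conditioning event is itself a symmetric function of the counts is precisely what makes the restriction to $I^{*}$ legitimate. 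Finally, your closing caveat is not pedantry but a real issue shared by both proofs: the proposition needs the common-law reading of $H_{0}$, i.e.\ the \emph{same} permutation for every specialist. The paper's binomial step already presupposes it (otherwise the number of other specialists choosing dimension $k$ is Poisson--binomial, with parameters varying by specialist, and the resulting expression need not be free of $j$), and under the literal reading, where each $e_{j}$ may carry its own permutation, the statement is simply false: with $p^{(i_{l})}$ small, one specialist placing the residual mass on one wrong dimension and all others on a different wrong dimension yields $W_{i_{l}}(e_{1}) = 0$ and $W_{i_{l}}(e_{j}) = 1$ almost surely, on an item that still passes the CI.
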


\begin{proof}
	We have that
	\begin{align*}
	\mathbb{P}\{W_{i_{l}}(e_{j}) = 1\} & = \mathbb{P}\{X_{i_{l}}(e_{j}) = M_{i_{l}}\} \\ & = \mathbb{P}\{X_{i_{l}}(e_{j}) = c(i_{l}), M_{i_{l}} = c(i_{l})\} + \mathbb{P}\{X_{i_{l}}(e_{j}) = M_{i_{l}}, M_{i_{l}} \neq c(i_{l})\}. 
	\end{align*}
	Now let $X^{(i_{l})} \sim Binomial(s-1,p^{(i_{l})})$ and $X_{k}^{(i_{l})} \sim Binomial(s-1,p_{k}^{(i_{l})}), k \in \{1, \dots, n-1\}$, be independent random variables, and let $f^{*} = \lfloor \frac{cs}{100} \rfloor$, in which \textit{c} is the \textit{CI}. Then,
	\begin{align*}
	\mathbb{P}\{X_{i_{l}}(e_{j}) = c(i_{l}), M_{i_{l}} = c(i_{l})\} & =  \mathbb{P}\{ M_{i_{l}} = c(i_{l})|X_{i_{l}}(e_{j}) = c(i_{l})\}\mathbb{P}\{X_{i_{l}}(e_{j}) = c(i_{l})\}\\  & = \mathbb{P}\{X^{(i_{l})} \geq f^{*}\} p^{(i_{l})}
	\end{align*}
	and
	\begin{align*}
	\mathbb{P}\{X_{i_{l}}(e_{j}) = M_{i_{l}}, M_{i_{l}} \neq c(i_{l})\} & = \sum\limits_{\substack{k=1 \\ k \neq c(i_{l})}}^{n} \mathbb{P}\{X_{i_{l}}(e_{j}) = k, M_{i_{l}} = k\} \\ & = \sum\limits_{\substack{k=1 \\ k \neq c(i_{l})}}^{n} \mathbb{P}\{ M_{i_{l}} = k|X_{i_{l}}(e_{j}) = k\}\mathbb{P}\{X_{i_{l}}(e_{j}) = k\} \\ & = \sum\limits_{k=1}^{n-1} \mathbb{P}\{X_{k}^{(i_{l})} \geq f^{*}\} p_{k}^{(i_{l})}.
	\end{align*}
	Hence,
	\begin{align*}
	\mathbb{P}\{W_{i_{l}}(e_{j}) = 1\} & =  \mathbb{P}\{X^{(i_{l})} \geq f^{*}\} p^{(i_{l})} + \sum\limits_{k=1}^{n-1} \mathbb{P}\{X_{k}^{(i_{l})} \geq f^{*}\} p_{k}^{(i_{l})}
	\end{align*}
	that does not depend on $e_{j}$ and the result follows.
\end{proof}

It is important to note that if all $p^{(i_{l})}$ are approximately $1$, then $\mathbb{P}\{W_{i_{l}}(e_{j}) = 1\} \approx p^{(i_{l})} \mathbb{P}\{X \geq f^{*}\}$ and the hypothesis that is really being tested is the first part of $H_{0}$. Therefore, it is reasonable to test $H_{0}$ in order to determine if the specialists have all the same capability to judge the items, for, if it is indeed true, we expect that all $p^{(i_{l})}$ are great and the second part of $H_{0}$ will hardly leads to the rejection of $H_{0}$ when the capability is the same.

This test may be used as a diagnostic for the content analysis of items. If $H_{0}$ is not rejected, then there is no evidence that the capabilities of the specialists are different. However, if $H_{0}$ is rejected, we do not know if it is the first or the second part (or both) of $H_{0}$ that is not being satisfied by the judgement of the specialists. Nevertheless, we may disregard the judgement of those specialists in any case, for either their capability is not the same or they are the same, but some $p^{(i_{l})}$ are small, what led to the rejection of $H_{0}$ by its second part.

\section{Hypothesis Testing}

The Chocran's Q test may applied to the random sample $\{w_{i_{l}}(e_{j}): i_{l} \in I^{*}, e_{j} \in E\}$ determined from $\{\boldsymbol{x}_{e_j}: e_{j} \in E\}$ as a way to test $H_{0}$ \cite{cochran1950}. The assumptions of the Chocran's Q test, using the notation of this paper, are:
\begin{itemize}
	\item[\textbf{(a)}] The items of $I^{*}$ were randomly selected from the items that form the universe \textit{U} that the instrument aims to assess.
	\item[\textbf{(b)}] The random variables $\{W_{i_{l}}(e_{j}): i_{l} \in I^{*}, e_{j} \in E\}$ are dichotomous.
	\item[\textbf{(c)}] The random variables $\{W_{i_{l}}(e_{j}): i_{l} \in I^{*}\}$ are independent.
\end{itemize}

From the usual scenario in which such test is applied, we have that the items may be seen as the \textit{blocks} and the specialists as the \textit{treatments}. What the Chocran's Q test evaluates is if the random variables $\{W_{i_{l}}(e_{j}): e_{j} \in E\}$ are identically distributed for all $i_{l} \in I^{*}$. Therefore, if we reject the null hypothesis of the test, we conclude that $\{W_{i_{l}}(e_{j}): e_{j} \in E\}$ are not identically distributed for all $i_{l} \in I^{*}$ and, by Proposition \ref{P2}, $H_{0}$ is also rejected. Thus, the hypothesis tested by the Chocran's Q test is indeed $H_{0}$.

The statistic of the test is calculated from Table \ref{T1}, in which $I^{*} = \{i^{*}_{1}, \dots,i^{*}_{v}\}$, and may be expressed as
\begin{equation*}
Q = \sum\limits_{r =1}^{s} \dfrac{s(s-1)\Big(D_{r} - \frac{N}{s}\Big)^{2}}{\sum_{l=1}^{v} R_{l}(s-R_{l})}
\end{equation*}

\begin{table}[H]
	\centering
	\caption{Table of the observed random sample.}
	\label{T1}
	\begin{tabular}{c|ccc|c}
		\hline
		\multirow{2}{*}{Item} & \multicolumn{3}{c|}{Specialist} & \multirow{2}{*}{Total} \\ \cline{2-4}
		& $e_{1}$ & $\cdots$ & $e_{s}$ & \\
		\hline
		$i^{*}_{1}$ & $w_{i^{*}_{1}}(e_{1})$ & $\cdots$ & $w_{i^{*}_{1}}(e_{s})$ & $R_{1} = \sum\limits_{e_{j} \in E} w_{i^{*}_{1}}(e_{j})$ \\
		$\vdots$ & $\vdots$ & $\vdots$ & $\vdots$ & $\vdots$ \\
		$i^{*}_{v}$ & $w_{i^{*}_{v}}(e_{1})$ & $\cdots$ & $w_{i^{*}_{v}}(e_{s})$ & $R_{v} = \sum\limits_{e_{j} \in E} w_{i^{*}_{v}}(e_{j})$ \\
		\hline 
		Total & $D_{1} = \sum\limits_{i_{l} \in I^{*}} w_{i_{l}}(e_{1})$ & $\cdots$ & $D_{s} = \sum\limits_{i_{l} \in I^{*}} w_{i_{l}}(e_{s})$ & $N = \sum\limits_{i_{l} \in I^{*}} \sum\limits_{e_{j} \in E} w_{i_{l}}(e_{j})$ \\
		\hline
	\end{tabular}
\end{table}

The exact distribution of the $Q$ statistics may be calculated by the method presented by \cite{patil1975}, although a large sample approximation may be used instead. If $|I^{*}|$ is large, then the distribution of $Q$ is approximately $\chi^{2}$ with $(s-1)$ degrees of freedom \cite{conover1998}.

It is worth mentioning that the random variables $\{W_{i_{l}}(e_{j}): e_{j} \in E\}$ being identically distributed for all $i_{l} \in I^{*}$ does not imply that the specialists have all the same capability to judge the items, although there is no evidence that their capabilities are different. If there is no evidence that the capabilities of the specialists to judge the items are different, their judgement may be accepted.

If it is determined that the random variables $\{W_{i_{l}}(e_{j}): e_{j} \in E\}$ are not identically distributed for all $i_{l} \in I^{*}$, then the judgement of the specialists is disregarded for $H_{0}$ is rejected. The items may be judged by different groups of specialists until they are judged by a set in which all the specialists have the same capability to judge the items. Those groups may be formed by new specialists or may be a subgroup of size $s^{*}$, $6 \leq s^{*} < s$, of the specialists for which $H_{0}$ was rejected.

\section{Simulation Study}

As the Cochran's Q test is not a powerful one, i.e., its Type I error may be too great, a simulation study will be conducted to estimate the power of the test in some specific cases. The power of a statistical test is defined as the probability of $H_{0}$ being rejected when it is false and depends on the real scenario, i.e., on the real values of the parameters considered on $H_{0}$. Therefore, the power of Cochran's Q test in testing $H_{0}$ depends on the real capability of each specialist in judging the items, so that the simulation study consider \textit{10} distinct scenarios and is conducted as follows. 

For each scenario, we will simulate \textit{50,000} judgements of the same items by the judges and then determine the proportion of the simulations in which $H_{0}$ was rejected at a significance, i.e., Type II error, of 5\%. This proportion will be regarded as an estimate for the power of the test in the considered scenario. A CI of \textit{50\%} will be used to determine $I^{*}$ in each simulation. Analysing the results of all \textit{10} scenarios, we will have a wide picture of the power of the test and will know for which scenarios it is more powerful.

We will consider in all scenarios nine specialists judging \textit{30} items into three dimensions, that is the framework of the application in the next section. We will also consider that the capability of each specialist is the same for all items, i.e., that $\mathbb{P}\{X_{i_{l}}(e_{j}) = c(i_{l})\} = p_{j}$ for all $j \in \{1,\dots,9\}$ and $l \in \{1,\dots,|I^{*}|\}$. Finally, we will assume that $\mathbb{P}\{X_{i_{l}}(e_{j}) = k\} = (1 -p_{j})/2$ for all $k \neq c(i_{l})$, $j \in \{1,\dots,9\}$ and $l \in \{1,\dots,|I^{*}|\}$. The scenarios and their estimated test power are displayed in Table \ref{simulations}.

\begin{table}[ht]
	\centering
	\caption{The estimated power of the test for each scenario.}
	\label{simulations}
	\begin{tabular}{c|lcc}
		\hline
		Scenario & Description & Items$^{*}$ & Power \\
		\hline
		1 & $p_{j} = 0.9$, $j \neq 1$ and $p_{1} = 0.45$ & 30 & 0.9931 \\
		2 & $p_{j} = 0.9$, $j \notin \{1,2,3\}$ and $p_{1} = p_{2} = p_{3} = 0.45$ & 30 & 0.9999 \\
		3 & $p_{j} = 0.9$, $j \notin \{1,2,3\}$ and $p_{1} = 0.45, p_{2} = 0.35, p_{3} = 0.25$ & 29 & 1 \\
		4 & $p_{j} = 0.9$, $j \neq 1$ and $p_{1} = 0.8$ & 30 & 0.1595 \\
		5 & $p_{j} = 0.9$, $j \notin \{1,2\}$ and $p_{1} = p_{2} = 0.8$ & 30 & 0.2413 \\
		6 & $p_{j} = 0.6$, $j \notin \{1,2,3\}$ and $p_{1} = p_{2} = p_{3} = 0.75$ & 29 & 0.2413 \\
		7 & $p_{j} = 0.3$, $j \notin \{1,2,3\}$ and $p_{1} = p_{2} = p_{3} = 0.75$ & 13 &  0.3167 \\
		8 & $(p_{1},\dots,p_{9}) = (0.1,0.2,0.3,0.4,0.5,0.6,0.7,0.8,0.9)$ & 15 & 0.4571 \\
		9 & $p_{j} = 0.9, \forall j$, but the second part of $H_{0}$ is not true & 30 & 0.0456 \\
		10 & $p_{j} = 0.6, \forall j$, but the second part of $H_{0}$ is not true & 26 & 0.0553 \\
		\hline
		\multicolumn{4}{l}{$^{*}$ The mean number of items not excluded by the \textit{CI}.}
	\end{tabular}
\end{table}

On the one hand, we see in Table \ref{simulations}, that the power of the test is great when the majority of the specialists have the same high capability, while a few specialists have a low capability, as is the case of scenarios 1, 2 and 3. On the other hand, the power of the test is quite low when some of the specialists have the same high capability, but the specialists with lower capability are almost as capable as them, as is the case of scenarios 4, 5 and 6.

On scenarios 7 and 8 we see that the power of the test is low when there are specialists with capability less than \textit{0.5}. It happens because the specialists hardly agree on the dimension that each item assesses (as some of them are not capable) so that many items are excluded by the \textit{CI} and, on the items that remain, the not capable specialists agree with the highly capable ones, so it seems that they have high capability. Indeed, in scenarios 7 and 8, the mean number of not excluded items are the lowest of all scenarios, so that a low concordance among the specialists is an evidence of the existence of low capable specialists, given that the items were well constructed.

Finally, as pointed out in the Mathematical Deduction section, we see in scenarios 9 and 10 that the hypotheses that is actually being tested when all the specialists are highly and equally capable is the first part of $H_{0}$, as the power of the test is close to the Type II error, what must be the case if the hypothesis is true. 

The simulation study shed light in some interesting facts about the proposed method on the considered scenarios. On the one hand, if the majority of the specialists have a homogeneous high capability, and a few specialists have a very low capability, then the power of the test is great. However, if the specialists have all high, but different, capability then the power of the test is low. On the other hand, if the majority of the specialists have a low capability, then a great number of items will be excluded by the \textit{CI} and, given that the items were well constructed, we may conclude that the specialists have low capability of judging the items, even though the power of the test is low. Finally, if only the first part of $H_{0}$ is being satisfied, and the capability of the specialists is high, then the power of the test is low and, therefore, the hypothesis that is really being tested is the first part of $H_{0}$.

\section{Application: Perception About the Evaluation of the Teaching-Learning}

In this section we will apply the developed method to a real validation process, in order to analyse the content of items of an instrument that aims to assess the perception of teachers and students of higher education institutions about the teaching-learning process, that is a construct that may be divided into three dimensions: process (P), judgement (J) and teaching-learning (T). 

The evaluation of the teaching-learning has a process dimension, as it must have a beginning, a middle and an end well defined and must have a continuous, cumulative and systematic character. Indeed, it is a systematic mechanism for gathering information over time, with well defined levels, what characterizes it as a process. Also, the evaluation of the teaching-learning has a judgement dimension because it must issue a judgement of value or assign a score through the analysis of educational results obtained from the information gathered over the time. Finally, the evaluation of the teaching-learning has a teaching-learning dimension for, as its own name says, it must not only evaluate the learning, but also the teaching: it should not only evaluate what the student has learnt, but also what the teacher has taught. Therefore, the evaluation of the teaching-learning is a process of data gathering, in which an individual will judge or be judged accordingly to the teaching-learning.

In order to develop an instrument to assess this construct, \textit{30} items were developed and sent to nine specialists so they would judge the items to the dimension that, according to their opinion, each one assesses. The condition defined for excluding an item is the \textit{CI} with $c = 50$. The judgements of the specialists are presented in Table \ref{judgement}, the table for the Cochran's Q test is displayed in Table \ref{Q} and a translation of the items, that were originally constructed in Portuguese, is presented in the Appendix. 

\begin{table}[ht]
	\centering
	\caption{Judgement of the specialists about each item, i.e., the sample $\{\boldsymbol{x}_{e_j}: e_{j} \in E\}$.}
	\label{judgement}
	\begin{tabular}{c|ccccccccc|cc}
		\hline
		\multirow{2}{*}{Item}& \multicolumn{9}{c|}{Specialist} & \multirow{2}{*}{Dimension$^{*}$} & \multirow{2}{*}{Theoretical} \\ \cline{2-10}
		& 1 & 2 & 3 & 4 & 5 & 6 & 7 & 8 & 9 & & \\ 
		\hline
		1 & T & P & P & T & T & T & T & P & P & T & P \\ 
		2 & P & J & T & T & P & P & T & P & T & - & T \\ 
		3 & T & T & J & P & P & P & P & J & J & - & J \\ 
		4 & J & P & P & P & P & P & P & J & J & P & P \\ 
		5 & T & J & T & T & T & T & J & T & P & T & P \\ 
		6 & P & T & T & P & T & T & T & P & T & T & P \\ 
		7 & J & J & J & J & J & J & J & J & J & J & J \\ 
		8 & T & P & P & T & P & P & T & P & T & P & P \\ 
		9 & J & P & T & J & T & T & P & P & P & - & T \\ 
		10 & J & J & J & J & J & J & J & J & J & J & J \\ 
		11 & J & T & J & J & J & J & J & J & J & J & J \\ 
		12 & P & T & T & P & P & P & P & J & P & P & P \\ 
		13 & J & P & P & T & J & J & J & J & T & J & J \\ 
		14 & T & T & T & T & P & P & T & P & T & T & P \\ 
		15 & J & P & J & J & J & J & J & J & J & J & J \\ 
		16 & T & P & T & T & P & P & J & P & T & - & P \\ 
		17 & P & P & P & P & T & T & P & T & P & P & T \\ 
		18 & J & T & T & T & P & P & J & T & J & - & T \\ 
		19 & T & T & T & T & P & P & T & J & P & T & T \\ 
		20 & P & P & P & T & P & P & P & J & J & P & P \\ 
		21 & J & J & J & J & J & J & J & J & P & J & P \\ 
		22 & P & P & P & P & P & P & T & T & P & P & P \\ 
		23 & J & J & J & J & J & J & J & J & J & J & J \\ 
		24 & T & J & P & T & P & P & T & J & T & - & J \\ 
		25 & J & J & J & J & J & J & J & J & J & J & J \\ 
		26 & T & P & T & T & P & P & P & P & T & P & T \\ 
		27 & P & P & P & T & P & P & T & J & P & P & T \\ 
		28 & T & P & P & J & P & P & J & P & T & P & T \\ 
		29 & T & T & P & T & P & P & P & P & T & P & T \\ 
		30 & T & J & J & J & T & T & J & J & P & J & J \\  
		\hline
		\multicolumn{12}{l}{$^{*}$ The dimension on which at least 50\% of the specialists agree that} \\
		\multicolumn{12}{l}{the item assesses.}
	\end{tabular}
\end{table}

\begin{table}[ht]
	\centering
	\caption{The table for the Cochran's Q test, i.e., the sample $\{w_{i_{l}}(e_{j}): i_{l} \in I^{*}, e_{j} \in E\}$.}
	\label{Q}
	\begin{tabular}{c|ccccccccc|c}
		\hline
		\multirow{2}{*}{Item}& \multicolumn{9}{c|}{Specialist} & \multirow{2}{*}{Total} \\ \cline{2-10}
		& 1 & 2 & 3 & 4 & 5 & 6 & 7 & 8 & 9 & \\ 
		\hline
		1 & 1 & 0 & 0 & 1 & 1 & 1 & 1 & 0 & 0 & 5 \\ 
		2 & 1 & 0 & 0 & 0 & 1 & 1 & 0 & 1 & 0 & 4 \\ 
		3 & 0 & 0 & 0 & 1 & 1 & 1 & 1 & 0 & 0 & 4 \\ 
		4 & 0 & 1 & 1 & 1 & 1 & 1 & 1 & 0 & 0 & 6 \\ 
		5 & 1 & 0 & 1 & 1 & 1 & 1 & 0 & 1 & 0 & 6 \\ 
		6 & 0 & 1 & 1 & 0 & 1 & 1 & 1 & 0 & 1 & 6 \\ 
		7 & 1 & 1 & 1 & 1 & 1 & 1 & 1 & 1 & 1 & 9 \\ 
		8 & 0 & 1 & 1 & 0 & 1 & 1 & 0 & 1 & 0 & 5 \\ 
		9 & 0 & 1 & 0 & 0 & 0 & 0 & 1 & 1 & 1 & 4 \\ 
		10 & 1 & 1 & 1 & 1 & 1 & 1 & 1 & 1 & 1 & 9 \\ 
		11 & 1 & 0 & 1 & 1 & 1 & 1 & 1 & 1 & 1 & 8 \\ 
		12 & 1 & 0 & 0 & 1 & 1 & 1 & 1 & 0 & 1 & 6 \\ 
		13 & 1 & 0 & 0 & 0 & 1 & 1 & 1 & 1 & 0 & 5 \\ 
		14 & 1 & 1 & 1 & 1 & 0 & 0 & 1 & 0 & 1 & 6 \\ 
		15 & 1 & 0 & 1 & 1 & 1 & 1 & 1 & 1 & 1 & 8 \\ 
		16 & 1 & 0 & 1 & 1 & 0 & 0 & 0 & 0 & 1 & 4 \\ 
		17 & 1 & 1 & 1 & 1 & 0 & 0 & 1 & 0 & 1 & 6 \\ 
		18 & 0 & 1 & 1 & 1 & 0 & 0 & 0 & 1 & 0 & 4 \\ 
		19 & 1 & 1 & 1 & 1 & 0 & 0 & 1 & 0 & 0 & 5 \\ 
		20 & 1 & 1 & 1 & 0 & 1 & 1 & 1 & 0 & 0 & 6 \\ 
		21 & 1 & 1 & 1 & 1 & 1 & 1 & 1 & 1 & 0 & 8 \\ 
		22 & 1 & 1 & 1 & 1 & 1 & 1 & 0 & 0 & 1 & 7 \\ 
		23 & 1 & 1 & 1 & 1 & 1 & 1 & 1 & 1 & 1 & 9 \\ 
		24 & 1 & 0 & 0 & 1 & 0 & 0 & 1 & 0 & 1 & 4 \\ 
		\hline
		Total & 18 & 14 & 17 & 18 & 17 & 17 & 18 & 12 & 13 & 144 \\ 
		\hline
	\end{tabular}
\end{table}

The statistic of the Cochran's Q test for the data in Table \ref{Q} is $Q = 8.7$ and the test p-value is $0.36$, so that there is no evidence that $H_{0}$ is not true, at a significance of 5\%. Furthermore, as the majority of the specialists agreed on the dimension that \textit{24} out of \textit{30} (80\%) items assess we also do not have evidence that the capability of the specialists is low. Therefore, based on the proposed method, there is no reason to disregard the judgement of the specialists.

Nevertheless, in order to illustrate the proposed approach for the case in which $H_{0}$ is rejected, we will apply the test to every subgroup of size $6 \leq s^{*} < 9$ of specialists, what amounts to \textit{130} subgroups, and see for what subgroups the capability of the specialists is the same. From the \textit{130} subgroups, for \textit{13} of them $H_{0}$ was rejected at a significance of 5\%. The $Q$ statistic and the p-value for the \textit{10} groups with greatest p-values are displayed in Table \ref{subgroups}. If $H_{0}$ had been rejected to the group of nine specialists we could now look for a subgroup of those specialists for which $H_{0}$ is not rejected and, by the help of a qualitative analysis, we could choose a subgroup of those specialists instead of disregarding their judgements as a whole and sending the items to other specialists to judge. 

\begin{table}[ht]
	\centering
	\caption{The result of the Cochran's Q test for the subgroups of specisliats with the biggest p-value.}
	\label{subgroups}
	\begin{tabular}{l|cc}
		\hline
		Specialists & Q & p-value \\ 
		\hline
		(2,3,5,6,7,8) & 0.778 & 0.978 \\ 
		(2,3,5,6,8,9) & 0.789 & 0.978 \\ 
		(1,2,3,4,5,6,8,9) & 2.478 & 0.929 \\ 
		(2,3,5,6,7,9) & 1.772 & 0.880 \\ 
		(2,3,4,5,7,8,9) & 2.441 & 0.875 \\ 
		(2,3,4,6,7,8,9) & 2.441 & 0.875 \\ 
		(1,3,4,5,6,7) & 1.923 & 0.860 \\ 
		(2,4,5,7,8,9) & 1.991 & 0.850 \\ 
		(2,4,6,7,8,9) & 1.991 & 0.850 \\ 
		(1,3,5,6,8,9) & 2.069 & 0.840 \\ 
		\hline
	\end{tabular}
\end{table}

\FloatBarrier
\section{Final Remarks}

The Cochran's Q test is not a powerful one, thus the method must be used with caution. The validation of a psychometric instrument is a process formed by various procedures, therefore it must not be restricted to the content analysis of items and the method developed in this paper. It is important to apply other validation techniques, both qualitative and quantitative, to the instrument so it may be properly validated.

The method may be improved in order to decrease even more the subjectivity of the content analysis of items, especially by the development of more powerful tests than the one established and the definition of other random variables that enable the comparison between the judgement of the specialists. This paper does not exhaust the subject, but present a nonparametric statistical approach that aims to decrease the subjectivity of a subjective process and that may applied not only to the content analysis of items, but also to any statistical application that enables the definition of variables such as those of this paper.

\section{Supplementary Materials}

The \textbf{R} \cite{R} code used in the simulation study and in the application section is available as supplementary material for this paper and can be accessed at \textit{www.ime.usp.br/$\sim$dmarcondes}.

\appendix
\section*{Appendix}
A translation of the constructed items, whose response is the Likert Scale, and their theoretical dimension, are presented below.

\begin{enumerate}
	\item The evaluation is an instrument strategically used to help on the difficulties (Process).
	\item The evaluation assumes a formative role on the teaching-learning (Teaching-Learning).
	\item The proposed evaluation methods are fair and appropriate (Judgement).
	\item The time available for evaluation is sufficient (Process).
	\item The evaluation offers recovery strategies for students that have difficulties (Process). 
	\item The instructions given for the assignments subjected to evaluation are useful (Process).  
	\item The evaluation is a tool for punishing the student in some manner (Judgement).
	\item The evaluation is an essential tool for the teaching-learning process (Process).
	\item The evaluation is an essential tool for the understanding of the taught subject (Teaching-Learning).
	\item The evaluation is a process that ranks the students is some manner (Judgement).
	\item The evaluation is a process that, in a particular way, builds a hierarchy among the students (Judgement).
	\item The evaluation is a process that follows the student during all his academic life (Process).
	\item The evaluation has different meanings for who evaluate and for who is evaluated (Judgement).
	\item The evaluation is used to find out where and how the teaching-learning may be improved (Process).
	\item The evaluation is a tool to reward the student in some manner (Judgement).
	\item The evaluation is a tool to diagnostic the teaching-learning (Process).
	\item The evaluation is a tool with technical and pedagogical characteristics (Teaching-Learning).
	\item The evaluation aims to identify how much the student has learnt the subjects (Teaching-Learning).
	\item The evaluation aims to identify which paths take to knowledge (Teaching-Learning).
	\item The evaluation is a systematic evidence gathering process (Process).
	\item The evaluation is a process of outlining, obtaining and providing informations that permit to judge decision alternatives (Process).
	\item The evaluation is a process with continuous, cumulative and systematic, but not episodic, character (Process).
	\item Evaluate means to provide a judgement of value or to assign a score to whom is being evaluated (Judgement).
	\item The evaluation is a tool that permits to inquire to what extent the defined objectives are being achieved (Judgement).
	\item The evaluation has an authoritarian and classificatory role inside the process of teaching-learning (Judgement).
	\item The evaluation is an educational component that can facilitate the teaching-learning (Teaching-Learning).
	\item The teaching-learning and the evaluation are not isolated parts of the education process (Teaching-Learning).
	\item The evaluation is the more adequate path to make it feasible an excellent teaching-learning (Teaching-Learning).
	\item The evaluation stimulate the acts of teaching and learning as a simultaneous process (Teaching-Learning).
	\item The evaluation involves the intentional judgement of a process developed by an individual, during your learning (Judgement).
\end{enumerate}

\bibliographystyle{amsplain}
\bibliography{mybibfile}

\providecommand{\bysame}{\leavevmode\hbox to3em{\hrulefill}\thinspace}
\providecommand{\MR}{\relax\ifhmode\unskip\space\fi MR }
\providecommand{\MRhref}[2]{%
  \href{http://www.ams.org/mathscinet-getitem?mr=#1}{#2}
}
\providecommand{\href}[2]{#2}
\begin{thebibliography}{1}

\bibitem{cochran1950}
William~G Cochran, \emph{The comparison of percentages in matched samples},
  Biometrika \textbf{37} (1950), no.~3/4, 256--266.

\bibitem{conover1998}
WJ~Conover, \emph{Practical nonparametric statistics}, John Wiley \& Sons,
  1998.

\bibitem{cronbach1955}
Lee~J Cronbach and Paul~E Meehl, \emph{Construct validity in psychological
  tests.}, Psychological bulletin \textbf{52} (1955), no.~4, 281.

\bibitem{haynes1995}
Stephen~N Haynes, David Richard, and Edward~S Kubany, \emph{Content validity in
  psychological assessment: A functional approach to concepts and methods.},
  Psychological assessment \textbf{7} (1995), no.~3, 238.

\bibitem{law1998}
Kenneth~S Law, Chi-Sum Wong, and William~M Mobley, \emph{Toward a taxonomy of
  multidimensional constructs}, Academy of management review \textbf{23}
  (1998), no.~4, 741--755.

\bibitem{lawshe1975}
Charles~H Lawshe, \emph{A quantitative approach to content validity}, Personnel
  psychology \textbf{28} (1975), no.~4, 563--575.

\bibitem{patil1975}
Kashinath~D Patil, \emph{Cochran's q test: Exact distribution}, Journal of the
  American Statistical Association \textbf{70} (1975), no.~349, 186--189.

\bibitem{R}
{R Core Team}, \emph{R: A language and environment for statistical computing},
  R Foundation for Statistical Computing, Vienna, Austria, 2017.

\end{thebibliography}

\end{document}